\newif\ifpicture
\newtheorem{theorem}{Theorem}
\newtheorem{example}[theorem]{Example}
\newtheorem{remark}[theorem]{Remark}
\begin{document}

\title{Nonlinear system modeling based on \\
constrained Volterra series estimates}

\author[$\dagger$]{P. \'{S}liwi\'{n}ski\footnote{This paper is a postprint of a paper
 submitted to and accepted for publication in IET Control Theory \& Applications and
 is subject to Institution of Engineering and Technology Copyright. The copy of record 
 is available at the IET Digital Library (\'{S}liwi\'{n}ski, P., et al. "Nonlinear system modeling
 based on constrained Volterra series estimates." IET Control Theory \& Applications 11.15 (2017):
2623-2629, DOI: 10.1049/iet-cta.2016.1360)}}
\author[$\star$]{A. Marconato}
\author[$\dagger$]{P. Wachel}
\author[$\star$]{G. Birpoutsoukis}

\affil[$\dagger$]{Department of Control Systems and Mechatronics, Wroc\l aw University of Science and Technology, Wybrze\.{z}e Wyspia\'{n}skiego 27, 50-370 Wroc\l aw, Poland}
\affil[$\star$]{Dept. ELEC, Vrije Universiteit Brussel, Pleinlaan 2, 1050 Brussels, Belgium}

\maketitle
\begin{abstract}
\looseness=-1 A simple nonlinear system modeling algorithm designed to work with limited \emph{a priori }knowledge and short data records, is examined. It creates an empirical Volterra series-based model of a system using an $l_{q}$-constrained least
squares algorithm with $q\geq 1$. If the system $m\left( \cdot \right) $ is a continuous and bounded map with a finite memory no longer than some known $\tau$, then (for a $D$ parameter model and for a number of measurements $N$) the difference between the resulting model of the system and the best possible theoretical one is guaranteed to be of order $\sqrt{N^{-1}\ln D}$, even for $D\geq N$. 
The performance of models obtained for $q=1,1.5$ and $2$ is tested on the
Wiener-Hammerstein benchmark system. The results suggest that the models
obtained for $q>1$ are better suited to characterize the nature of the
system, while the sparse solutions obtained for $q=1$ yield smaller error
values in terms of input-output behavior.
\end{abstract}

\section{Introduction}

We consider the following well-known problem: given limited prior knowledge
about a discrete-time nonlinear dynamic system and a limited amount of noisy
measurements $\left\{ \left( u_{n},y_{n}\right) \right\} ,$ $n=1,\ldots ,N,$
find an accurate model that describes the system's behavior; \cite{Ljun:2010}%
. The only important assumption we make on the system is that it is represented by a continuous map $m\left( \mathbf{u}_{n}\right) $, where $\mathbf{u}_{n}=[u_{n},\ldots,u_{n-\tau }]^{T}$, which has a finite memory, whose length is not specified but no larger than some $\tau \in \mathbb{N}$.

Such a system, for bounded input signals, can be approximated arbitrarily
well by a double-truncated Volterra series model; \emph{cf. e.g. }\cite%
{Alpe:1965,Boy:1984,San:1992,PeaOgu:2002,KekGia:2011}, \cite{Scho:2015} and
the recent survey \cite{Chen:2017}. In practice, when no further information
about the system is available, one would like to take the largest accessible
model (limited only by the computational resources)\ to get the best
possible approximation. Such an approach, in case of the Volterra
representation, has however an immediate consequence: the number $D$ of
model parameters is large even for models of moderate size. For the standard
least squares approach this implies that, in order to estimate these
parameters accurately from the noisy measurements of the system, the number
of measurements $N$ should be much larger than $D$. In such a case, the
corresponding computation routines become both time consuming and prone to
numerical errors; see \emph{e.g.} \cite%
{Alpe:1965,Wahl:1991,Jud95,Vand:1999,Ljun:2010}, \cite{WesKea:2003}, \cite%
{Ogu:2007}, \cite{Mar:2004}. 
The recent advances in statistics alleviate this issue by offering
constrained optimization algorithms that produce models of good quality for
a number of measurements $N$ comparable to (or even smaller than) the number
of model parameters $D$. The constraint assumptions are mild and in this
work they translate to the requirement that, for a given system, the $l_{q}$%
-norm, $q\geq 1$, of the vector composed of its Volterra representation
coefficients is finite. Our modeling algorithm relies on constrained convex
optimization techniques, \cite{Boyd:2004,Kaka:2012}, and is derived from the
aggregative algorithms which, for $q=1$, were proposed and examined in \emph{%
e.g. }\cite{JudNem:2000,Nem:2000}, and then applied to dynamic nonlinear
systems in \cite{WachSli:2015}.

The paper contribution consists in:

\begin{enumerate}
\item The class of nonlinear system modeling algorithms based on Volterra
series and constrained optimization is examined for $l_{q}$ norms, $q\geq 1$.

\item The theoretical bounds for model errors are derived for finite $D$ and 
$N$. It is shown, in particular, that these errors grow like $\sqrt{\ln D}$
and vanishes like $\sqrt{N^{-1}}$ with growing $D$ and $N$, respectively; 
\emph{cf. e.g. }\cite{Wang:2009,Zeng:2013}, where small sample size
properties are also examined in nonlinear system modeling problems.

\item The practical performance was verified using the benchmark data from 
\cite{SchSuyLju:2009} and further investigated using the numerical experiment.
\end{enumerate}

\section{Problem statement\label{Sec_PS}}

The discrete-time nonlinear system of interest is described by the following
input-output equation%
\begin{equation}
y_{n}=m\left( \mathbf{u}_{n}\right) +e_{n}=m\left( u_{n},u_{n-1},\ldots
,u_{n-\tau }\right) +e_{n},  \label{syst1}
\end{equation}%
where $u_{n}$ and $e_{n}$ are the input and the noise signals, respectively.
The system is \emph{single-input single-output} (SISO)\thinspace\ and $%
m\left( u_{n},u_{n-1},\ldots ,u_{n-\tau }\right) $ denotes a nonlinear
mapping whose output depends not only on the current input $u_{n}$ but also
on the previous $\tau $ ones.
\begin{figure}[tbp]
\centering
\par
\ifpicture
\includegraphics[width=0.25\textwidth]{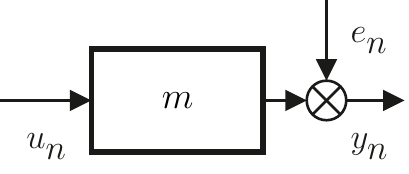} 
\fi
\caption{A nonlinear system.}
\end{figure}

About the signals and the system we assume that:

\begin{itemize}
\item[\textbf{A1}.] The system input $u_{n}$ is a sequence of bounded \emph{%
i.i.d. }random variables.

\item[\textbf{A2}.] The noise $e_{n}$ is a zero-mean \emph{i.i.d.} random
sequence with a finite variance, $\sigma ^{2}<\infty $. The noise and the
input are mutually independent sequences.

\item[\textbf{A3}] The nonlinear system $m\left( \cdot \right) $ has a finite memory of a length which is unknown but no longer than some $\tau \in 
\mathbb{N}$. Moreover, $m\left( \cdot \right) $  is 
continuous and bounded, \emph{i.e.} there is some $M_{m}>0$ for which $\left\vert m\left( \mathbf{u}_{n}\right) \right\vert \leq M_{m}$.
\end{itemize}

Assumptions \textbf{A1} and \textbf{A2} are typical for many system
identification problems; \emph{cf. e.g. }\cite{Ljun:2010}. Assumption 
\textbf{A3} says that our prior knowledge about the system is rather
limited. We neither assume the knowledge of the system structure nor imply
that there exists a specific parametric representation of any of its
elements. The continuity requirement in \textbf{A3} allows the system to be 
approximated by the Volterra series-based models; see also Remark \ref{Rem_VvsT}. 

\begin{example}
\label{Example_EFM}Any LTI system with $\tau $-finite memory satisfies
Assumption \textbf{A3}. It also holds true for the aforementioned \emph{e.g.} block-oriented cascade (Hammerstein, Wiener, Wiener-Hammerstein, \emph{etc.})\ or
multibranched cascade (\emph{e.g.} Uryson) systems with $\tau $-finite
overall memory length of their dynamic blocks and with (at least)\ Lipschitz
static nonlinearities.
\end{example}

Our goal is to find a good estimate of the system under Assumptions \textbf{%
A1}--\textbf{A3} based on a double-truncated Volterra series model

\begin{eqnarray}
V\left( \mathbf{u}_{n}\right) &=&h_{0}+H_{1L}\left( \mathbf{u}_{n}\right)
+\cdots +H_{PL}\left( \mathbf{u}_{n}\right)  \label{Volterra_model} \\
&=&h_{0}+\sum_{p=1}^{P}H_{pL}\left( \mathbf{u}_{n}\right) ,  \notag
\end{eqnarray}%
where $h_{0}$ is a constant and $H_{pL}\left( \mathbf{u}_{n}\right) $ are
the truncated Volterra operators (see \emph{e.g.} \cite%
{Alpe:1965,Boy:1984,PeaOgu:2002})%
\begin{equation*}
H_{pL}\left( \mathbf{u}_{n}\right) =\sum_{k_{1}=0}^{L-1}\cdots
\sum_{k_{p}=0}^{L-1}h_{k_{1}\ldots k_{p}}\cdot \prod_{i=1}^{p}u_{n-k_{i}},
\end{equation*}%
and where $h_{k_{1}\ldots k_{p}}$ are the $p$th order Volterra kernels; $P$
denotes the degree of the expansion and $L\leq \tau $ stands for the memory
length of the model.

The Volterra expansion is derived from the Taylor
series, see \emph{e.g.} \cite[Ch. 2.1]
{Mar04}, \cite[Ch. 1.4]{Doy02} and extends its modeling capabilities by incorporating dynamics into models. It however does not expand the modeling capability
w.r.t. nonlinearities. For the Volterra series, the class of admissible nonlinearities appears, in fact, to be smaller as indicated by an example of the \emph{peak-hold
operator} given by Boyd and Chua in \cite[p. 1152]{Boy:1984} (here in a
discrete-time version):%
\begin{equation}
y_{n}=\max_{k=0,1,\ldots }\left\{ x_{n-k}\right\} .  \label{NFM}
\end{equation}%
The operator in (\ref{NFM}) is \emph{continuous} but has no \emph{fading
memory }property and thus (by virtue of their theorem \cite[%
Th. 3]{Boy:1984})\ cannot be approximated arbitrarily well by a Volterra
series. This example shows that, in order to effectively model the nonlinear
system by the Volterra series, the assumptions imposed on the system
nonlinearity:

\begin{itemize}
\item cannot be, in general, considered separately from the dynamics, and
need to be stronger than in case of static nonlinear systems and
their Taylor expansion-based models, where, by virtue of the
Weierstrass-Stone theorem, it suffices that the nonlinearity is continuous
(if we want it to be approximated arbitrarily well - as it is the case in our paper) or is analytic (if we want to recover it fully). Nevertheless, 

\item in the particular case of systems satisfying Assumption \textbf{A3}, the continuity requirements is sufficient because of the systems' finite memory.

\end{itemize}

\begin{remark}
\label{Rem_VvsT}In a special case of block-oriented systems of known
structure (like \emph{e.g.} the Hammerstein, Wiener, Uryson, or LNL and NLN
systems), where the Volterra kernels can be expressed in terms of impulse
response coefficients of dynamic linear blocks and of derivatives of the
static nonlinearities, the constraints imposed on these nonlinearities can
also be equivalent to the Taylor series ones; \emph{cf. e.g.} \cite[Ch. 2.1.1]%
{Mar04} and \cite{KekGia:2011}. 
\end{remark}


\section{The algorithm\label{Sec_Alg}}

Thanks to the fact that the kernels $h_{k_{1}\ldots k_{p}}$ in (\ref%
{Volterra_model}) are symmetric with respect to the permutations of indices $%
k_{1}\ldots k_{p}$, the number of parameters $D$ to be estimated reduces
from $\frac{L^{P+1}-1}{L-1}$ to $\binom{L+P}{L}$; \emph{cf. }\cite%
{KekGia:2011}.\ However, even for small $P$ and $L$, $D$ is a large number
and might not be further reduced without additional information about the
structure of the system; see \emph{e.g. }\cite{KibFav:2006,KekGia:2011}.
Moreover, even if the system structure was known, we would not be able to
decide what parameters $P$ and $L$ should be used since, under Assumption 
\textbf{A3}, the class of systems is still too general to be exactly
represented by Volterra models. Therefore, in such a scenario one would take
the largest available dictionary and let the data select the best subset.
This requires an algorithm which either:

\begin{itemize}
\item selects the largest (\emph{i.e.} the most important)\ coefficients\ (%
\cite{Tibs:1996,Kukr:2009,JameTibs:2013}) or, at least,

\item is robust against the effect of overparametrization (\emph{i.e.} the
excessive number of model parameters).
\end{itemize}

We will show that the presented algorithm offers the latter property and, in
particular, works when $D>N$.

\begin{remark}
Some system structures, \emph{e.g.} the Hammerstein or Uryson ones,\ have
structured or sparse Volterra representations \cite{KibFav:2006,KekGia:2011}%
. Nevertheless, we do not assume that the structure of the system is known
and therefore we need an algorithm which will work in either case (note that
the LASSO-based\ algorithms are designed to work when it is assumed \emph{a
priori }that the model has a sparse structure \cite{JameTibs:2013}).
\end{remark}

For notation simplicity, we arrange the coefficients of the Volterra series
kernels $h_{k_{1}\ldots k_{p}},$ $p=1,\ldots ,P$, $k_{j}=0,\ldots ,L-1$ and $%
j=0,\ldots ,p$, from the model (\ref{Volterra_model}), into a vector $%
\mathbf{\theta }=\left[ \theta _{1},\theta _{2},\dots ,\theta _{D}\right]
^{T}$ and denote the corresponding Volterra terms as $m_{i}\left( \mathbf{u}%
\right) ,$ $i=1,\ldots ,D$. The collection $\left\{ m_{i}\left( \mathbf{u}%
\right) \right\} $ will be referred to as a dictionary. Note that by
Assumption \textbf{A1} we have that $\left\vert m_{i}\left( \mathbf{u}%
\right) \right\vert \leq M_{d}$, for some $M_{d}>0.$

The system model is thus expressed as%
\begin{equation}
\hat{m}(\mathbf{u};\mathbf{\hat{\theta})}=\sum_{i=1}^{D}\hat{\theta}%
_{i}m_{i}\left( \mathbf{u}\right) ,  \label{model}
\end{equation}%
where $\mathbf{\hat{\theta}}=[\hat{\theta}_{1},\hat{\theta}_{2},\dots ,\hat{%
\theta}_{D}]^{T}$ is the empirical counterpart of the vector $\mathbf{\theta 
}$, obtained from the measurement set $\left\{ \left( u_{n},y_{n}\right)
\right\} ,$ $n=1,\ldots ,N$, by minimization of the empirical quadratic
criterion (note that the summation in (\ref{Q_rand}) starts at $\tau +1$
since the first $\tau $ values of the input are not known):%
\begin{equation}
\hat{Q}\left( \mathbf{\theta }\right) =\frac{1}{N-\tau }\sum_{i=\tau +1}^{N}%
\left[ \hat{m}\left( \mathbf{u}_{i};\mathbf{\theta }\right) -y_{i}\right]
^{2}.  \label{Q_rand}
\end{equation}%
with the following constraint imposed on the solution%
\begin{equation}
\left\Vert \mathbf{\theta }\right\Vert _{q}\leq D^{\frac{1}{q}-1}\text{, for
a given }q\geq 1.  \label{argmin}
\end{equation}

\begin{remark}
\label{remark_sparsity}By virtue of the well-known relation between the $%
l_{q}$-norms%
\begin{equation}
\left\Vert \mathbf{\theta }\right\Vert _{q}\leq \left\Vert \mathbf{\theta }%
\right\Vert _{1}\leq D^{1-\frac{1}{q}}\left\Vert \mathbf{\theta }\right\Vert
_{q},  \label{norm_ineq}
\end{equation}%
the constraint in (\ref{argmin}) implies that $\left\Vert \mathbf{\theta }%
\right\Vert _{1}\leq 1$, which is a pivotal fact used in the proof of the
theoretical behavior of the model $\hat{m}(\mathbf{u};\mathbf{\hat{\theta})}$%
, and cannot be relaxed in general. However, if it is known that the
structure of the system has a sparse Volterra representation, that is, if $%
\mathbf{\theta }$ is $K$-sparse\footnote{%
That is, there are $K$ non-zero coefficients.} for some $K<D$, then the
right-hand side inequality in (\ref{norm_ineq}) turns into $\left\Vert 
\mathbf{\theta }\right\Vert _{1}\leq K^{1-\frac{1}{q}}\left\Vert \mathbf{%
\theta }\right\Vert _{q}$. In this case, the constraint in (\ref{argmin})
remains the same for $q=1$, but for $q>1$ it can be turned into a weaker one%
\begin{equation*}
\left\Vert \mathbf{\theta }\right\Vert _{q}\leq K^{\frac{1}{q}-1}.
\end{equation*}
\end{remark}

\begin{example}
For $q=1,1.5$ and $2$, the constraint (\ref{argmin}) takes the forms 
\begin{equation*}
\left\Vert \mathbf{\theta }\right\Vert _{1}\leq 1\text{, }\left\Vert \mathbf{%
\theta }\right\Vert _{1.5}\leq \sqrt[3]{D^{-1}}\text{, and }\left\Vert 
\mathbf{\theta }\right\Vert _{2}\leq \sqrt{D^{-1}},
\end{equation*}%
respectively.
\end{example}

\begin{remark}
In case of Volterra series, the constraint in (\ref{argmin}) means that the
sequence of Volterra kernel coefficients $\theta $ is in the space $l_{q}$,
that is, they are $l_{q}$-summable. Note that this assumption is satisfied
for all models with finite $P$\ and $L$\ (and for all $q\geq 1$) up to the
multiplicative constant -- see Section \ref{section_tuning} and \emph{cf.} Assumption \textbf{A3}. In turn, for
the infinite memory models, it is satisfied if the system possesses a fading
memory property; see \cite{Boy85}. Such a property holds, \emph{e.g. }for
Hammerstein, Wiener, Hammerstein-Wiener systems with Lipschitz
nonlinearities and asymptotically stable linear subsystems; \emph{cf. }%
Remark \ref{Rem_VvsT} and Example \ref{Example_EFM}.
\end{remark}

\begin{figure}[tbp]
\centering
\par
\ifpicture
\includegraphics[width=0.5\textwidth]{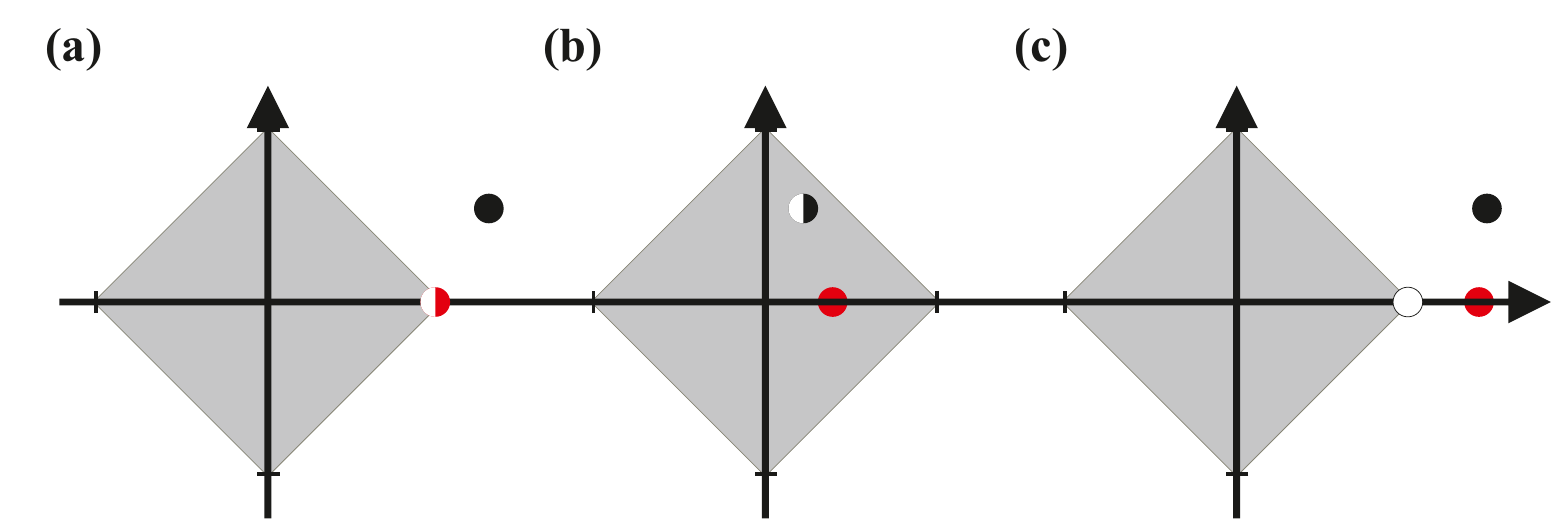} \fi
\caption{Possible selections of the tuning parameter (multiplication factor $%
R$) for $q=1$: \textbf{(a) }the optimal one (the smallest for which $%
\left\Vert \mathbf{\protect\theta }^{\ast }\right\Vert _{1}=1$ holds); 
\textbf{(b)} too large $R$ can make the unconstrained LS solution
acceptable; \textbf{(c)} too small $R$ can make the optimal solution $%
\mathbf{\protect\theta }^{\ast }$ infeasible (black dot -- LS solution, red
-- the optimal one $\mathbf{\protect\theta }^{\ast }$, and white -- the $%
l_{1}$ constrained empirical solution $\mathbf{\hat{\protect\theta}}$).}
\label{Fig_MF}
\end{figure}

\subsection{Theoretical properties}

Here we present the main result of the paper, that is, the upper bound of
the discrepancy between the empirical models (\ref{model}) and the best
possible one, given as%
\begin{equation}
\hat{m}\left( \mathbf{u};\mathbf{\theta }^{\ast }\right)
=\sum_{i=1}^{D}\theta _{i}^{\ast }m_{i}\left( \mathbf{u}\right) ,
\label{conv_opt}
\end{equation}%
where $\mathbf{\theta }^{\ast }$ is the solution to the following
constrained least squares problem%
\begin{equation}
\mathbf{\theta }^{\ast }=argmin_{\lVert \mathbf{\theta }%
\rVert _{q}\leq D^{1/q-1}}Q\left( \mathbf{\theta }\right) ,
Q\left( 
\mathbf{\theta }\right) =E\left\{ \hat{m}\left( \mathbf{u}_{n};\mathbf{%
\theta }\right) -y_{n}\right\} ^{2},  \label{Q_opt}
\end{equation}%
for a given $q\geq 1$; \emph{cf.} (\ref{Q_rand}) and (\ref{argmin}).

The theorem below gives the upper bound of this discrepancy and offers a
formal justification of a good behavior of the examined models in case when $%
D$ is large and $D\geq N$.

\begin{theorem}
\label{Th-1}Let the nonlinear system (\ref{syst1}) fulfill Assumptions 
\textbf{A1}--\textbf{A3} and let (\ref{argmin}) hold for some $q=\chi \geq 1$%
, then, for any $q\in \left[ 1,\chi \right] $, the difference between the
empirical model, $\hat{m}(\mathbf{u};\mathbf{\hat{\theta})}$, and the best
possible one, $m\left( \mathbf{u};\mathbf{\theta }^{\ast }\right) $, has the
following upper bound%
\begin{equation}
E\{Q(\mathbf{\hat{\theta})\}}-Q\left( \mathbf{\theta }^{\mathbf{\ast }%
}\right) \leq C\cdot \frac{\sqrt{N}}{N-\tau }\sqrt{\left( \tau +1\right) \ln
D},  \label{CV_psi_II}
\end{equation}%
\newline
for any $D>2,$ where $C=32\sqrt{e}(M\sigma +2M^{2})$ and $M=\max \left\{
M_{m},M_{d}\right\} $.
\end{theorem}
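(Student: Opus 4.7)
The plan is to use the standard constrained-ERM excess-risk recipe: bound the excess risk by twice the uniform deviation $\sup_{\mathbf{\theta}\in\Theta_q}|\hat{Q}(\mathbf{\theta}) - Q(\mathbf{\theta})|$ and then control that deviation. Two structural ingredients do the work: (i) the norm inequality from Remark~\ref{remark_sparsity}, which embeds the feasible set $\Theta_q := \{\|\mathbf{\theta}\|_q \le D^{1/q-1}\}$ inside the $l_1$ unit ball $\Theta_1 := \{\|\mathbf{\theta}\|_1 \le 1\}$ for every $q\in[1,\chi]$, turning a $\sqrt{D}$ dependence into $\sqrt{\ln D}$; and (ii) a decoupling of the finitely-dependent empirical sum into $\tau+1$ i.i.d.\ subsequences, which produces the $\sqrt{\tau+1}$ factor.

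First I would write the excess-risk decomposition: since $\hat{\mathbf{\theta}}$ minimises $\hat{Q}$ on $\Theta_q$ and $\mathbf{\theta}^{\ast}\in\Theta_q$, a routine add-and-subtract gives $Q(\hat{\mathbf{\theta}}) - Q(\mathbf{\theta}^{\ast}) \le 2\sup_{\mathbf{\theta}\in\Theta_q}|\hat{Q}(\mathbf{\theta}) - Q(\mathbf{\theta})|$, and Remark~\ref{remark_sparsity} enlarges the sup to $\Theta_1$. Next I would decouple: partition $\{\tau+1,\ldots,N\}$ into the $\tau+1$ arithmetic progressions $\mathcal{I}_k = \{i\equiv k\!\pmod{\tau+1}\}$. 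By Assumption~\textbf{A1} together with the finite-memory bound $\tau$, within each $\mathcal{I}_k$ the regressor--response pairs $(\mathbf{u}_i, y_i)$ are i.i.d., and each block has size $n_k \le (N-\tau)/(\tau+1)+1$. Triangle inequality then reduces the sup over the $N-\tau$ correlated samples to a sum of $\tau+1$ sups over blocks of $n_k$ i.i.d.\ samples.

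The core step is the $l_1$-aggregation bound on one block. Expanding $(y_i - \hat{m}(\mathbf{u}_i;\mathbf{\theta}))^{2}$, the $\mathbf{\theta}$-dependent centred deviation splits into a linear functional in $\mathbf{\theta}$ and a bilinear form $\mathbf{\theta}^{T} A \mathbf{\theta}$ (the $y_i^{2}$ piece cancels when one centers). Over $\Theta_1$ the supremum of $\mathbf{\theta}^{T} v$ equals $\|v\|_\infty$, and the supremum of a symmetric bilinear form is attained on the vertices $\pm e_j$ and is therefore $\max_{j,k}|A_{jk}|$. So each supremum collapses to the maximum of at most $D^{2}$ scalar empirical averages of random variables bounded by $M^{2}$ plus a noise-times-dictionary cross term involving the unbounded $e_i$. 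Truncating $e_i$ at a level of order $M\sqrt{\ln D}$ and controlling the remainder through $E e_i^{2} = \sigma^{2}$, a Hoeffding bound together with a union bound over the $D$ (resp.\ $D^{2}$) coordinates yields a per-block deviation of order $(M\sigma + M^{2})\sqrt{\ln D / n_k}$ with absolute constant; this is essentially the Juditsky--Nemirovski $l_1$-aggregation lemma the paper cites, and it generates both the constant $32\sqrt{e}$ and the $(M\sigma + 2M^{2})$ structure. Summing the $\tau+1$ block bounds and dividing by $N-\tau$ produces the announced inequality, the extra $\sqrt{N/(N-\tau)}$ factor absorbing the ceiling in $n_k$.

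I expect the main obstacle to be the noise-times-dictionary piece of the per-block step: the quadratic loss is not uniformly bounded because $e_n$ is only $L_2$-bounded, so Hoeffding does not apply directly. The required truncation-plus-Bernstein argument is standard, but pinning down an absolute constant through both the truncation and the $\sqrt{\tau+1}$ aggregation is what forces the particular value $32\sqrt{e}(M\sigma + 2M^{2})$ and demands the most careful bookkeeping; the excess-risk decomposition, the norm embedding, and the blockwise decoupling are all routine.
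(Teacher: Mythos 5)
Your skeleton coincides with the paper's: the excess-risk decomposition into twice the uniform deviation over the constraint set, the use of the norm inequality (\ref{norm_ineq}) to replace the $l_{q}$ ball by the $l_{1}$ unit ball, the collapse (via triangle/H\"{o}lder) of the supremum over $\left\Vert \mathbf{\theta }\right\Vert _{1}\leq 1$ of the linear-plus-bilinear deviation to the $\ell _{\infty }$ norm of the vector built from the entries of $\mathbf{\hat{A}}-\mathbf{A}$ and $\mathbf{\hat{b}}-\mathbf{b}$, and the boundedness of those entries through $M=\max \{M_{m},M_{d}\}$ as in (\ref{M}). The blockwise decoupling into $\tau +1$ subsequences of i.i.d.\ samples, which you spell out, is exactly the mechanism (inherited from \cite{WachSli:2015} and \cite{JudNem:2000}) that the paper invokes implicitly with the phrase \textquotedblleft following from this point the proof as in \cite{WachSli:2015}\textquotedblright , and it correctly produces the $\sqrt{N}(N-\tau )^{-1}\sqrt{(\tau +1)\ln D}$ shape of (\ref{CV_psi_II}). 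So in structure you have reproduced the paper's argument.

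The one place you diverge is the concentration step for the noise-times-dictionary term, and as written it would fail. With $e_{i}$ only assumed to have finite variance, you propose truncating at a level of order $M\sqrt{\ln D}$; but then the discarded tail contributes about $n\,M\sigma ^{2}/T$ per block, i.e.\ a term of order $n/\sqrt{\ln D}$, which swamps the target $\sqrt{n\ln D}$ rate. A truncation-plus-Bernstein route can be repaired by letting the level grow with the block size (roughly $T\propto \sigma \sqrt{n/\ln D}$), but the cleaner argument --- and the one behind the cited lemma of \cite{JudNem:2000}/\cite{WachSli:2015}, whence the factor $\sqrt{e}$ and the explicit constant --- avoids truncation altogether: symmetrize, apply the sub-Gaussian maximal inequality for Rademacher averages conditionally on the data (union bound over the $O(D^{2})$ coordinates gives the $\sqrt{\ln D}$), and then use Jensen, $E\sqrt{\sum_{i}e_{i}^{2}}\leq \sigma \sqrt{n}$, which is exactly where Assumption \textbf{A2} with merely finite $\sigma ^{2}$ suffices. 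With your route you would obtain the right rate after fixing the truncation level, but not, without considerable extra bookkeeping, the specific constant $32\sqrt{e}(M\sigma +2M^{2})$; to be fair, the paper does not rederive that constant either (it imports it, and its appendix even states it once with $M^{2}$ in place of $2M^{2}$), so the substantive gap in your proposal is the miscalibrated truncation, not the overall plan.
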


\begin{proof}
See Appendix.
\end{proof}

The theorem generalizes the result obtained in \cite{WachSli:2015} for $q=1$%
\ (see also the original one in \cite{JudNem:2000}) and, in particular, says
that:

\begin{itemize}
\item The upper bound of the error is practically immune to the number of
model parameters $D$ as it grows only logarithmically with $D$. This
property is of special significance for the Volterra series-based models,
for which $D$ grows fast with both $L$ and $P$.

\item For $D$ being of order equal to (or larger than) $\sqrt{N}$ (and hence
for $D\geq N$), our models have the error upper bound lower than those
produced by the unconstrained least squares algorithms, for which this bound
is of order $D/N$; \emph{cf. e.g.} \cite{Ljun:2010}.
\end{itemize}

\begin{remark}
\label{Rem_Gamma}The exact value of $\tau $\ is only needed to establish the
formal bound of the error in (9). If $\tau $ is not known (or known to be
infinite), then one should expect slightly worse performance of the
algorithm -- as shown for $q=1$ in \cite{Wach:2016} that if $\tau =\infty $,
then the error bound vanishes slower (by a factor $\sqrt{\ln N}$) with
growing $N$ and the additional term $N^{-c},c>0$\ occurs. 
\end{remark}

\begin{remark}
Assumptions \textbf{A1-A3} can be slightly weakened. For instance, one can
admit correlated input; see \emph{e.g.} \cite{WachSli:2015} for $q=1$. In
that case, the resulting models error vanishes with growing $N$ and still
maintain its robustness against large number of model parameters $D$ as in (%
\ref{CV_psi_II}). The only drawback is that the overall memory length $\tau $
need to be increased in order to take into account the correlation of the
signal. Also, the system memory does not need to be finite, however, the
resulting error bound is larger as indicated in the Remark \ref{Rem_Gamma}.
\end{remark}

\subsection{Tuning the algorithm\label{section_tuning}}

Usually, we do not know \emph{a priori }whether the constraint (\ref{argmin}%
) is satisfied by the system of interest. Observe however that we can make
any system satisfying the Assumption \textbf{A3} compliant with (\ref{argmin}%
) by multiplying the dictionary entries by some, sufficiently large, factor $%
R>0$.\footnote{%
Or by using the equivalent constraint $\Vert \hat{\theta}\Vert _{q}\leq
R\cdot D^{1/q-1}.$} Below we shortly examine the impact of $R$ on the
aggregation error bound in (\ref{CV_psi_II}). 
\begin{figure}[tbp]
\centering
\par
\ifpicture
\includegraphics[width=0.5\textwidth]{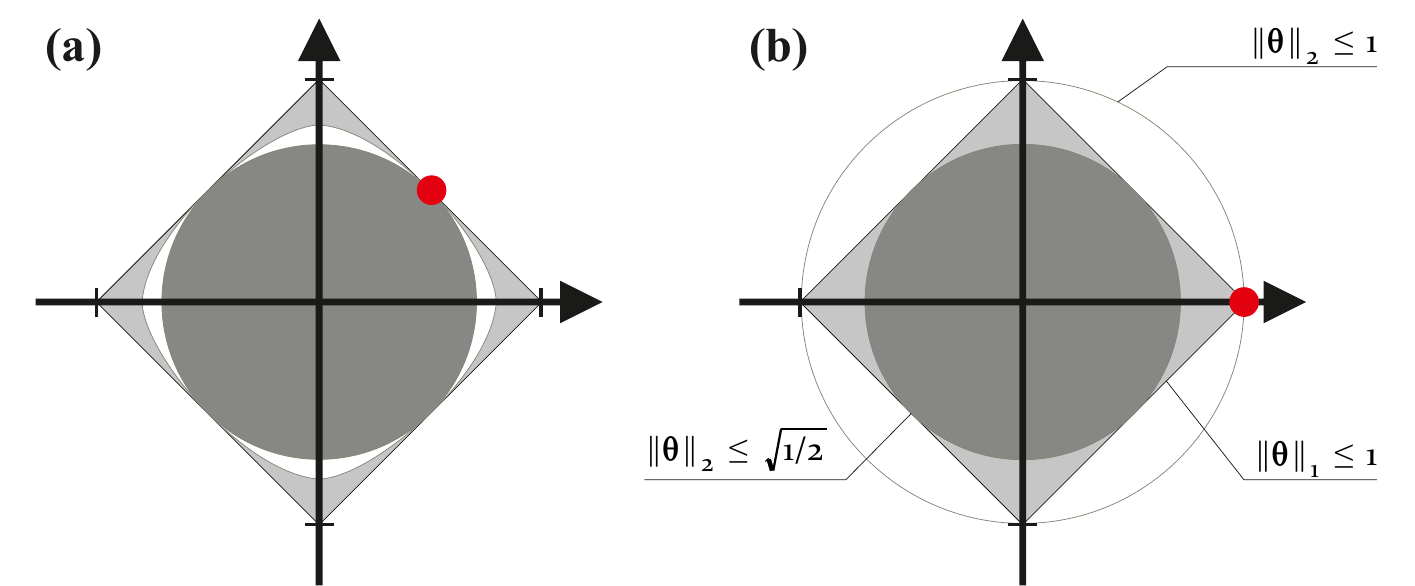} \fi
\caption{\textbf{(a) }Shapes and relative sizes of the constraint in (%
\protect\ref{argmin}) for $q=1$ (light grey), $q=1.5$ (white) and $q=2$
(dark gray); $D=2$; $\mathbf{\protect\theta }^{\ast }=[1/2~1/2]$ \textbf{(b)}
For a system with a 'sparse' representation ($\mathbf{\protect\theta }^{\ast
}=[1~0]$), the dictionary entries, which are optimal for $q=1$, need to be,
for $q=2$, multiplied by $R=\protect\sqrt{2}$ (which is also equivalent to
multiplying by $R$ the bound in (\protect\ref{argmin})).}
\label{Fig_shapes}
\end{figure}

\subsubsection{Case $q=1$}

For the $l_{1}$ constraint, $R$ has a quite straightforward impact on the
aggregation error bound (\ref{CV_psi_II}): its squared value occurs there as
a multiplicative factor (see (\ref{MR})\ in Remark \ref{MR_proof} following
the proof in Appendix \ref{Th-1} and \emph{cf.} (\ref{CV_psi_II})):%
\begin{equation}
C=32\sqrt{e}(RM\sigma +2\left( RM\right) ^{2})  \label{CV_psi_III}
\end{equation}%
One would therefore like to make $R$ small, however, decreasing $R$ pushes
the optimal solution $\mathbf{\theta }^{\mathbf{\ast }}$ towards the
constraint boundary and can eventually set it outside of it; see the
illustrations in Figs. \ref{Fig_MF}a-c.

\subsubsection{Case $1<q\leq 2$}

The left-hand side of the norm inequality in (\ref{norm_ineq}) indicates
(and Fig. \ref{Fig_shapes}a illustrates it for $D=2$) that if for a given $%
q=\chi >1$, the constraint in (\ref{argmin}) holds true for a system, then
it holds for any other $q$'s such that $1\leq q<\chi $ and -- consequently
-- the upper bound constants in (\ref{CV_psi_III}) remain valid for all
these $q$'s.

In the reversed situation, however, when it is only known that (\ref{argmin}%
) holds for $q=1$, one needs to take $R=D^{1-\frac{1}{q}}$\ (\emph{cf.} the
right-hand side of the inequality in (\ref{norm_ineq}) and Fig. \ref%
{Fig_shapes}b) to assure that (\ref{argmin}) is valid for $q>1$ as well.
This, unfortunately, can have a detrimental influence on the algorithm
behavior since then%
\begin{equation}
C=32\sqrt{e}(D^{1-\frac{1}{q}}M\sigma +2(D^{1-\frac{1}{q}}M)^{2})
\label{CV_psi_IV}
\end{equation}%
and the upper bound of the aggregation error increases with $q$ and becomes
approximately $D^{2(1-1/q)}$ times larger (\emph{i.e. }up to $D$ times
larger for $q=2$) than in (\ref{CV_psi_II})\footnote{%
In case of a $K$-sparse representation, the factor $D^{1-1/q}$ in (\ref%
{CV_psi_IV}) reduces to $K^{1-1/q}$.}.

\subsubsection{Empirical tuning algorithm\label{Algorithm_tuning}}

Below we propose a simple empirical algorithm to select a value of the
tuning parameter $R$.

{\textbf{Algorithm:}} For a given $q\geq 1$, pick some (large) $R>0$, such
that $\Vert \hat{\theta}\Vert _{q}\ll D^{\frac{1}{q}-1}$. Next, pick some
(small) $\varepsilon >0$ and decrease\footnote{%
It can effectively be implemented using a fast bisection search procedure.} $%
R$ until 
\begin{equation*}
\Vert \hat{\theta}\Vert _{q}\in \lbrack D^{\frac{1}{q}-1}-\varepsilon ,D^{%
\frac{1}{q}-1}).
\end{equation*}

In the presence of noise one should usually decrease $R$ even more once $%
\Vert \hat{\theta}\Vert _{q}=D^{\frac{1}{q}-1}$ is attained, however, there
is a risk that in such a case the actual solution $\theta ^{\ast }$ will be
shifted outside the constraint and a systematic (bias) error will be
introduced; see Fig. \ref{Fig_MF}c.

\section{Experimental/simulation results\label{section_experiments}}

We first tested the algorithm on a real data example, namely the
Wiener-Hammerstein benchmark presented in \cite{SchSuyLju:2009}.\footnote{%
Since the data are generated by the real system, we cannot assure that all
the assumptions \textbf{A1-A3} hold.} Then, to get a deeper insight into the
behavior of the algorithm in a controlled environment, we also made
additional experiments using a 'toy-example' system with a
Wiener-Hammerstein structure; the systems are referred to as WHB\ and WH2,
respectively.

\subsection{Wiener-Hammerstein benchmark}

The benchmark data were taken from a nonlinear electronic system with a
Wiener-Hammerstein structure, as depicted in Fig.~\ref{figWH}. More details
on how the benchmark data were generated can be found in \cite%
{SchSuyLju:2009}.

\begin{figure}[t]
\centering
\ifpicture
\includegraphics[width=0.5\textwidth]{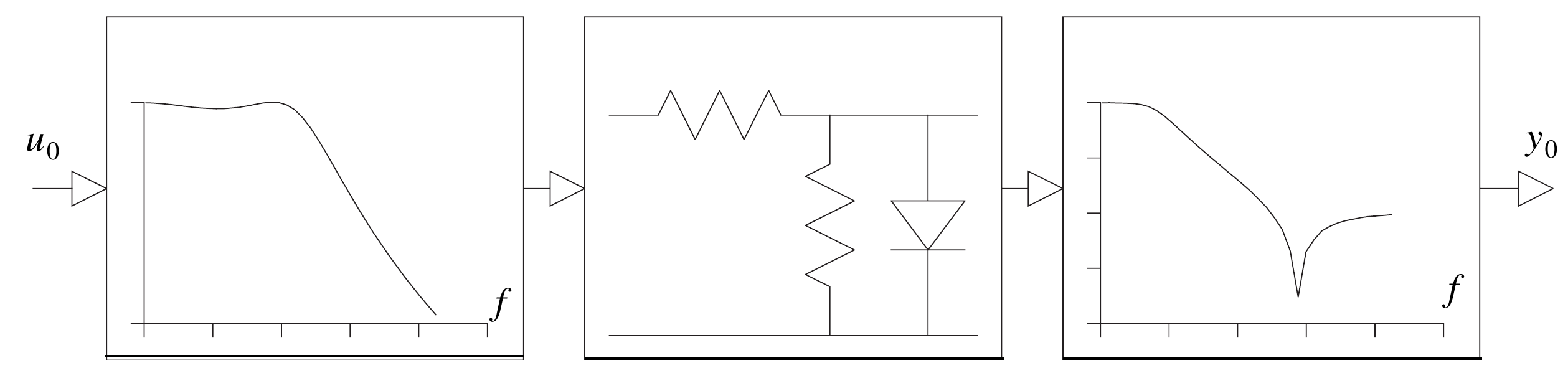} \fi
\caption{The Wiener-Hammerstein benchmark system.}
\label{figWH}
\end{figure}

The system was modeled with a third degree Volterra series with varying
(decreasing) memory lengths of each kernel, namely $L_{1}=80,$ $L_{2}=40$
and $L_{3}=20$. It resulted in the overall number of parameters $D=2441$.
The measurement sets of lengths $N=500,1000,2000$ and $5000$ were taken from
the benchmark's learning data pool. Three models were estimated for $q=1,1.5$
and $2$. Algorithm \ref{Algorithm_tuning} was used to tune the $R$ factor
separately for each $q$; see Fig. \ref{Fig_shapes}.

\begin{remark}
Decreasing the memory length in the consecutive Volterra kernels was
dictated by the excessive computational overhead (the model with equal
memory length kernels would have $\binom{80+3}{3}=\allowbreak 91\,881$
parameters) and the numerical issues encountered for the model of that size. 
\begin{figure}[tbp]
\centering%
\begin{subfigure}{0.5\textwidth}
\ifpicture
        \includegraphics[width=0.8\textwidth]{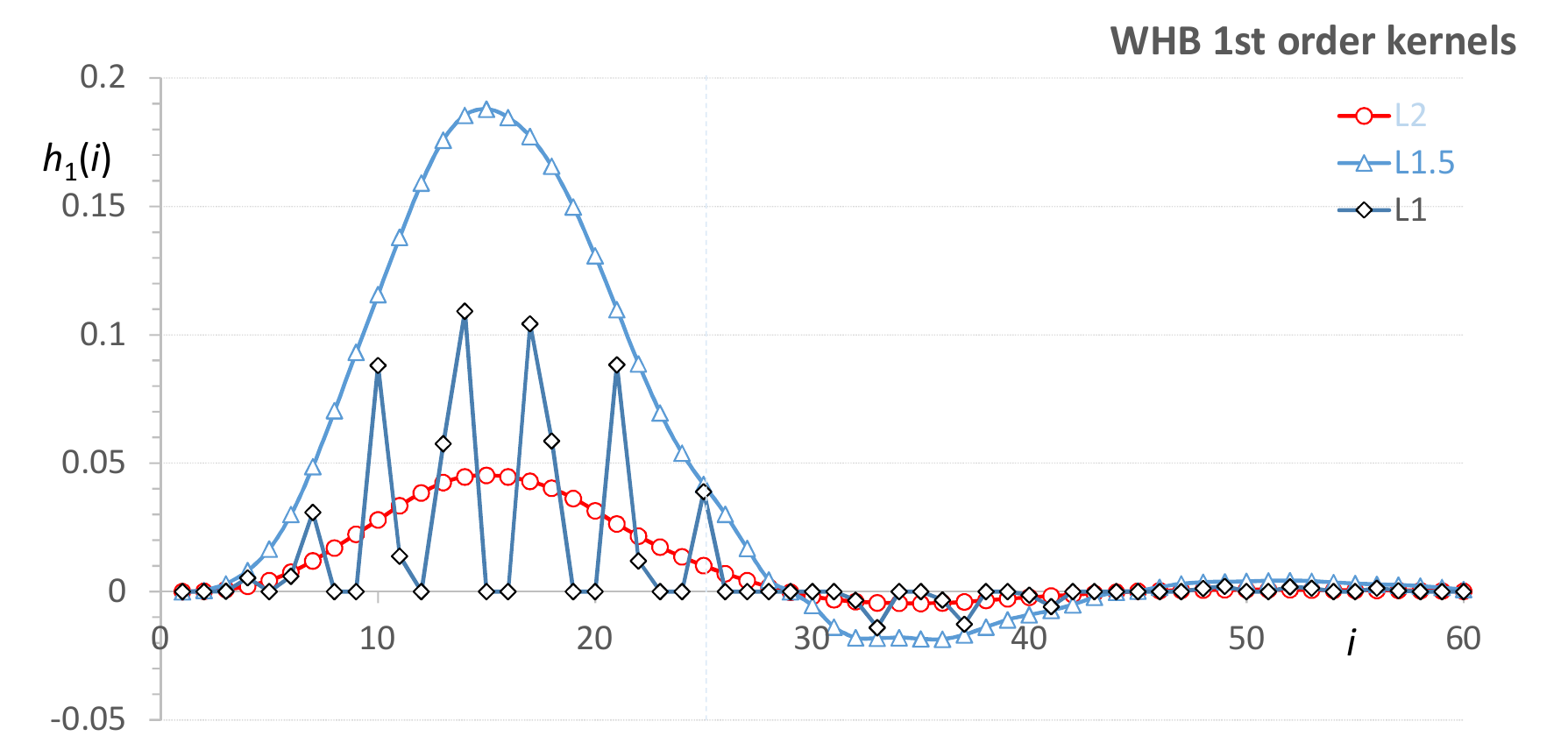}
\fi
        \caption{First order Volterra kernels for $q=1,1.5,2$.}
    \end{subfigure}
\par
\begin{subfigure}{0.5\textwidth}
\ifpicture
       \includegraphics[width=0.8\textwidth]{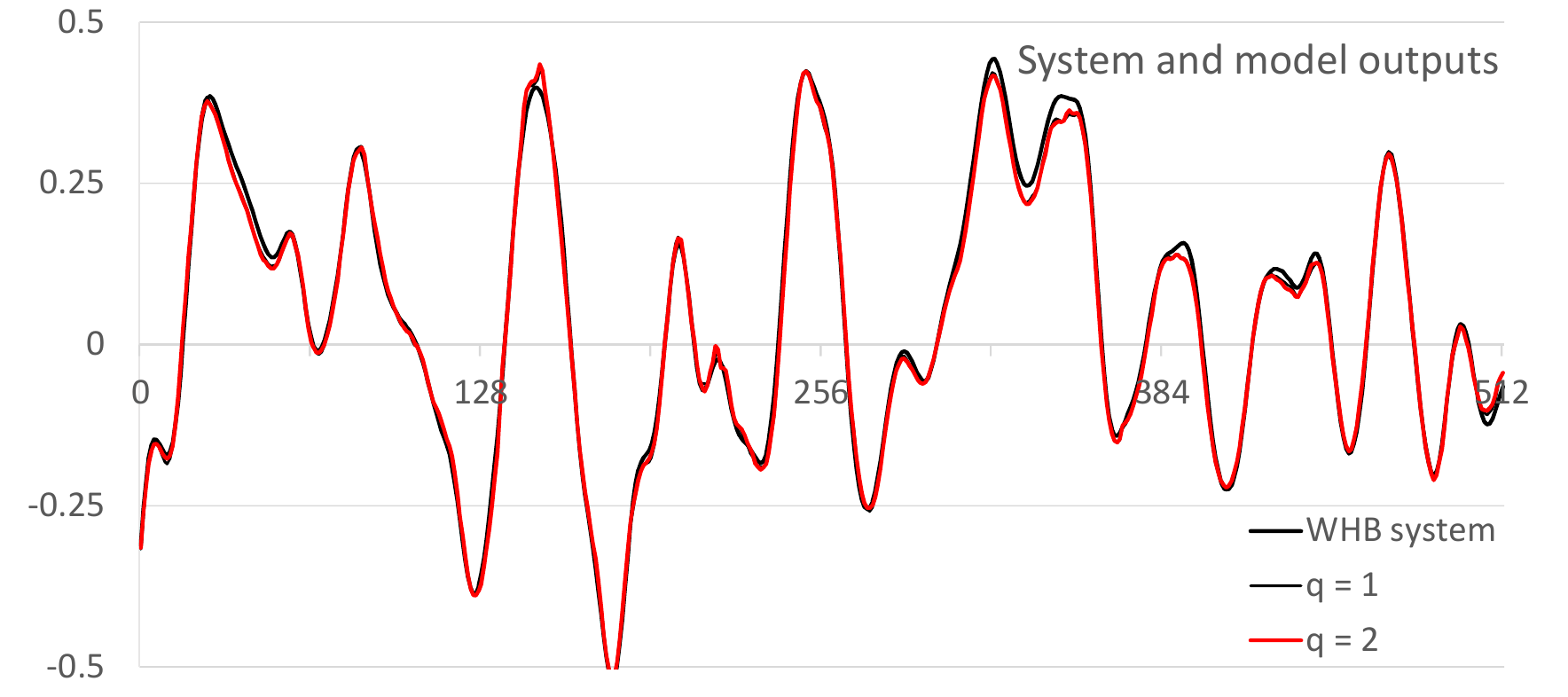}
\fi
        \caption{Outputs of the system and its models for $q=1,2$.}
    \end{subfigure}
\caption{WHB benchmark. Model kernels shapes and outputs comparison.}
\label{FIG_WHB_shapes}
\end{figure}
\end{remark}

The experiments revealed that for each $N$ the model obtained for $q=1$
offered the smallest RMSe error (calculated according to the formula (4) in 
\cite{SchSuyLju:2009}) with respect to input-output behavior; see Fig. \ref%
{Fig_WHB_errors}. It can be explained by its tendency to select a sparse
representation; see Fig. \ref{FIG_WHB_shapes} and \ref{Fig_WH2_shapes}; 
\emph{cf. e.g.} \cite{Will:2012,Marc:2012} and \cite[Ch. 6]{JameTibs:2013}. 
\begin{figure}[tbp]
\centering%
\begin{subfigure}{0.5\textwidth}
\ifpicture
	\includegraphics[width=0.8\textwidth]{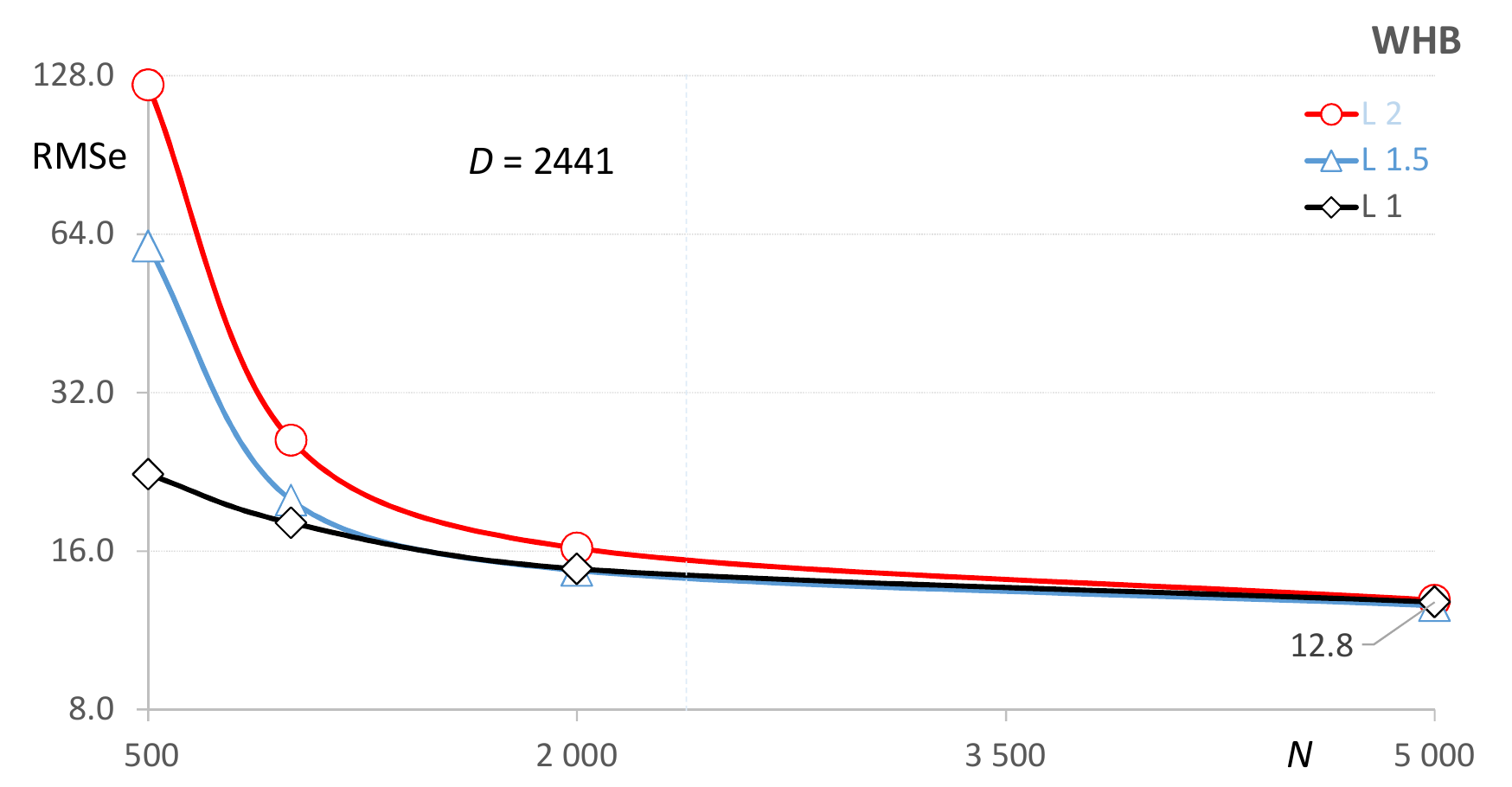}
\fi
        \caption{Model errors}
    \end{subfigure}
\par
\begin{subfigure}{0.5\textwidth}
\ifpicture
        \includegraphics[width=0.8\textwidth]{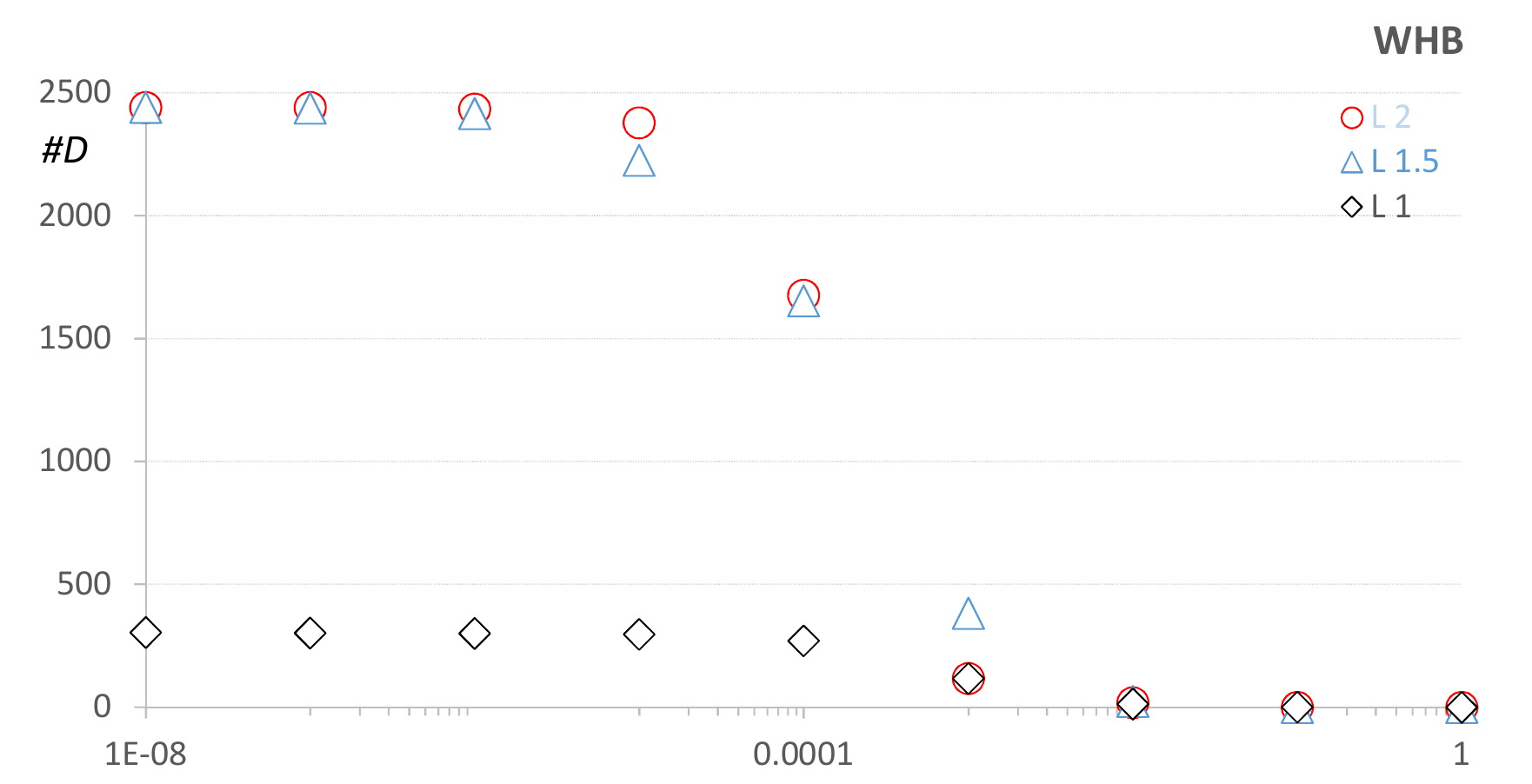}
\fi
        \caption{Model sparsities, i.e. the number of model coefficients greater than value on the $OX$ axis.}
    \end{subfigure}
\caption{WHB benchmark. Comparison of model errors and their sparsities}
\label{Fig_WHB_errors}
\end{figure}

\subsection{Simulation example}

Let us now consider a system with the same Wiener-Hammerstein structure, but
with the nonlinearity $x^{2}$, the input dynamics having the two-tap impulse
response $\lambda _{0}=1$ and $\lambda _{1}=-1,$ and with the output
dynamics having the following discrete transfer function $G\left( z\right)
=0.2655z/\left( z^{2}-1.714z+0.78\right) $. The input signal $\left\{
u_{n}\right\} $ was white and uniformly distributed in the interval $[-\sqrt{%
3},\sqrt{3}]$. The additive output noise was white and Gaussian, and was
scaled in order to make the SNR = $1,10,40,80$ and $100$. The measurement
sets had lengths $N=500,750,1000,1250,1500$ and $2000$. The Volterra model
used in this experiment was the second order one with equal memory length
kernels, $L_{1}=L_{2}=L=40$ (it thus had $D=861$ parameters).

The experiment was repeated to evaluate three models, for $q=1,1.5$ and $2$,
respectively. The results confirm the advantage of the model obtained for $%
q=1$ which, regardless of the noise level, is still able to yield the
sparsest model of the system and the lowest RMSe values; see Fig. \ref%
{Fig_WH2_errors}a,b and \emph{cf. }Fig. \ref{Fig_WHB_errors}a,b.

\begin{figure}[tbp]
\centering%
\begin{subfigure}{0.5\textwidth}
\ifpicture
        \includegraphics[width=0.8\textwidth]{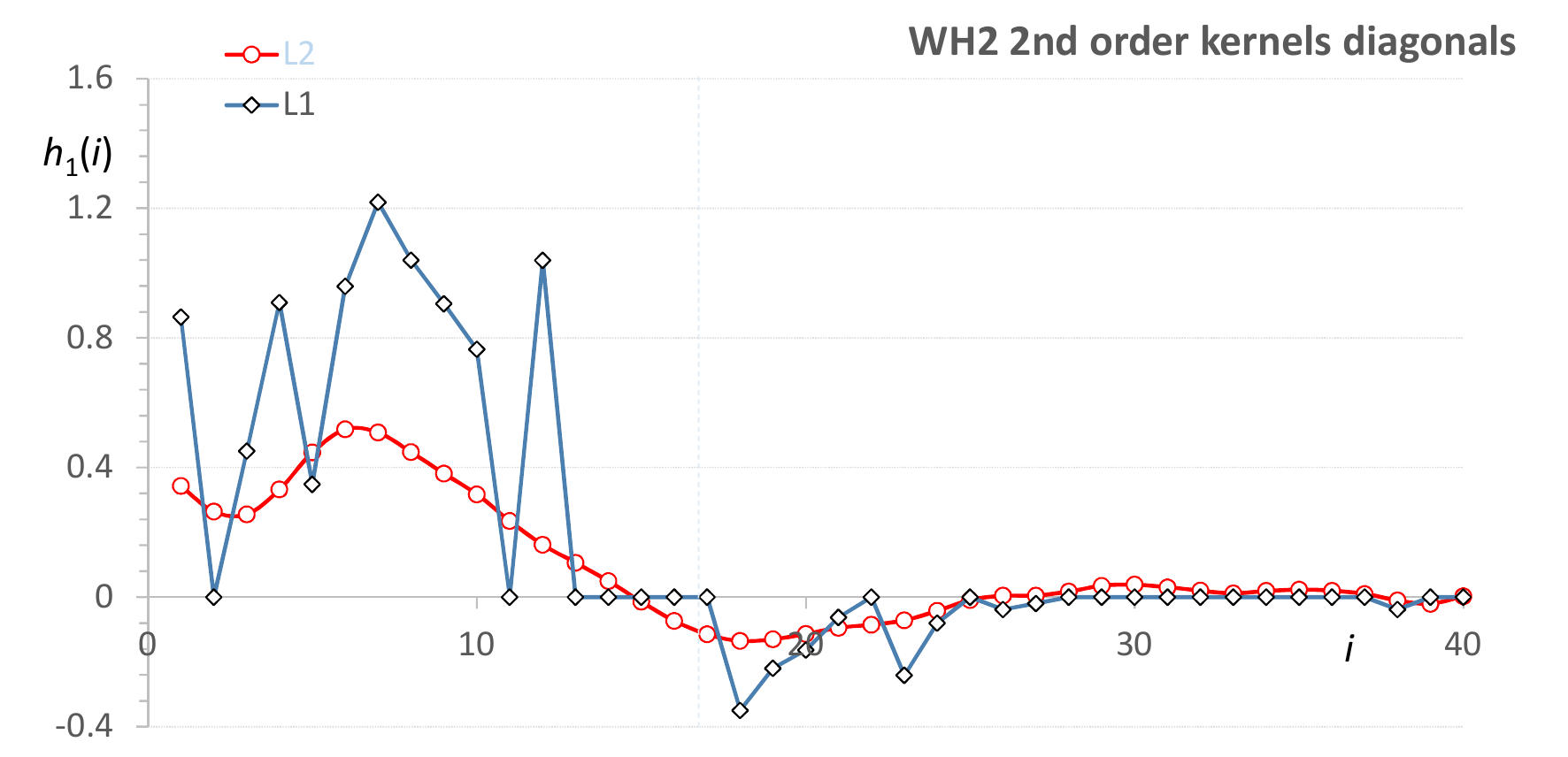}
\fi
        \caption{Second order Volterra kernels diagonals in models for $q=1,1.5,2$.}
    \end{subfigure}
\par
\begin{subfigure}{0.5\textwidth}
\ifpicture
        \includegraphics[width=0.8\textwidth]{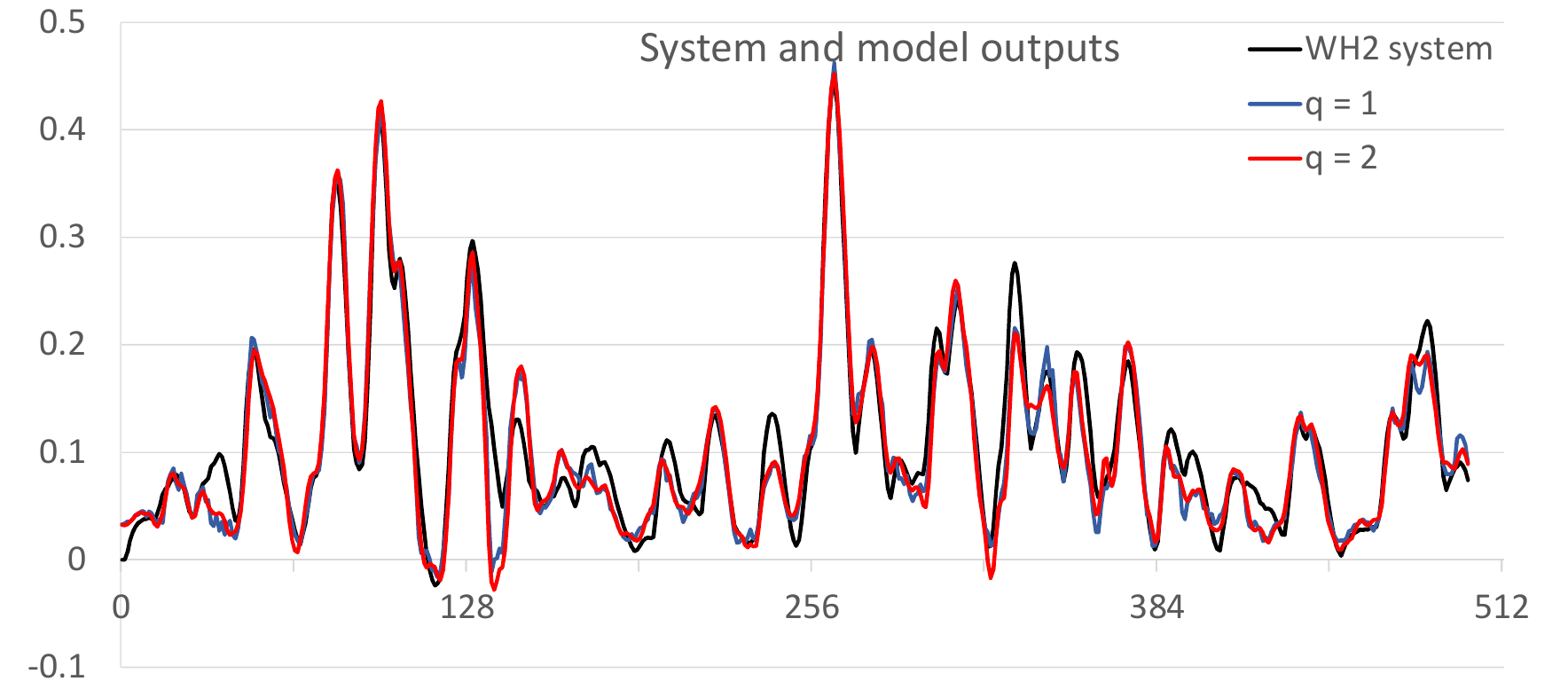}
\fi
        \caption{Outputs of the system and its models for $q=1,2$.}
    \end{subfigure}
\caption{WH2 system driven by a correlated input signal.}
\label{Fig_WH2_shapes}
\end{figure}

Comparing the diagrams in Figs. \ref{FIG_WHB_shapes}a,b one can observe that
-- in spite of various shapes of the kernels obtained for various $q$ -- the
WHB system models produced almost the same output signals. In attempt to
replicate this phenomenon for the WH2 system we correlated its input signal%
\footnote{%
In \cite{WachSli:2015} the correlated input was admitted and the result in
that work can easily be applied to our case.} by the dynamics $G\left(
z\right) $ and removed the output noise; see Figs. \ref{Fig_WH2_shapes}a,b.
Now, to explain this behavior, we denote by $\mathbf{\hat{\theta}}_{1}$, $%
\mathbf{\hat{\theta}}_{2}$ the solutions found by the algorithm for $q=1$
and for $q=2$, and by $\mathbf{S}$ the matrix composed from the consecutive
Volterra terms%
\begin{eqnarray}
\mathbf{S} &=&[\mathbf{s}_{1}^{T},\mathbf{s}_{2}^{T},\allowbreak \dots
,\allowbreak \mathbf{s}_{N}^{T}]^{T},  \notag \\
\mathbf{s}_{n} &=&[m_{1}\left( \mathbf{u}_{n}\right) ,\allowbreak
m_{2}\left( \mathbf{u}_{n}\right) ,\allowbreak \dots ,\allowbreak
m_{D}\left( \mathbf{u}_{n}\right) ]^{T},  \label{sh}
\end{eqnarray}%
The different shapes of the kernels, together with virtually the same
outputs produced by either model, mean therefore that $\mathbf{S}\left( 
\mathbf{\hat{\theta}}_{1}-\mathbf{\hat{\theta}}_{2}\right) \approx \mathbf{0}
$, which implies that the input signal correlation made the columns of $%
\mathbf{S}$ linearly dependent (\emph{i.e.} the difference vector $\mathbf{%
\hat{\theta}}_{1}-\mathbf{\hat{\theta}}_{2}$ belongs to the null space of
that matrix);\emph{\ cf.} also \cite{Hebi:2013}, where the LASSO\ algorithm
was tested for correlated data.

\begin{figure}[tbp]
\centering%
\begin{subfigure}{0.5\textwidth}
\ifpicture
        \includegraphics[width=0.8\textwidth]{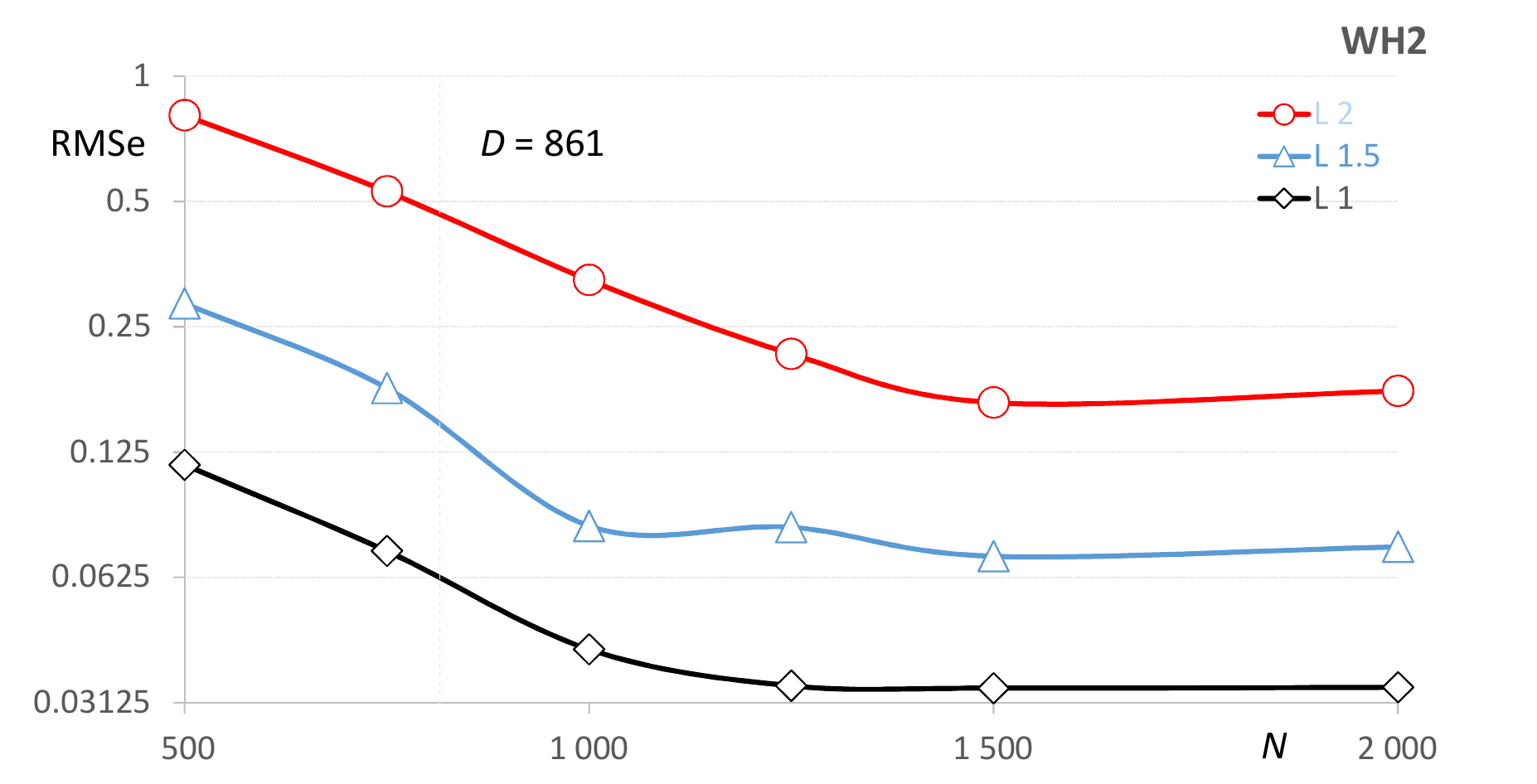}
\fi
\caption{Model errors (averaged over SNR = 1, 10, 40, 80 and 100).}
    \end{subfigure}
\par
\begin{subfigure}{0.5\textwidth}
\ifpicture
        \includegraphics[width=0.8\textwidth]{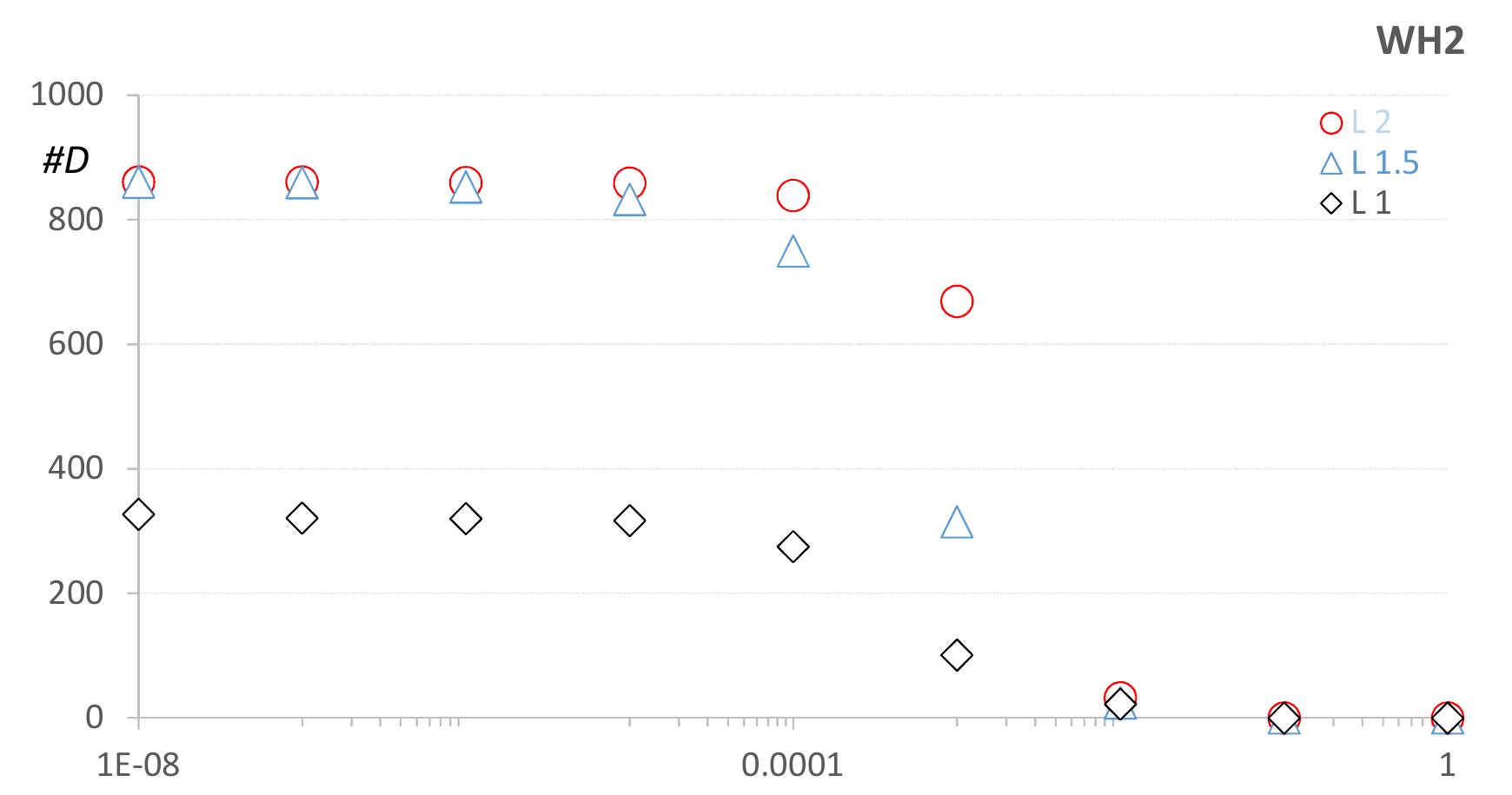}
	\caption{Model sparsities; $D$ = 861, SNR = 40, $N$ = 1000.}
\fi
    \end{subfigure}
\caption{WH2 system. Comparison of model errors and their sparsities.}
\label{Fig_WH2_errors}
\end{figure}

The performance of the proposed algorithms was finally compared to the
algorithms based on Volterra series and unconstrained least squares
approach. The results are presented in Fig. \ref{Fig_Lq_LS}. One can observe
that the models obtained by the proposed constrained algorithms (especially
for $q=1$) have smaller error even when the number of measurements is
significantly smaller ($N=500$ vs. $N=5000$) that in the unconstrained case.
Comparing our results to those in the literature, we would like to point out
that:

\begin{enumerate}
\item In the \cite{WachSli:2015}, where the $l_{1}$ algorithm was originally
proposed, the error of the model was compared to the best possible model
(that is, to the best approximation of the system by the Volterra series).
The difference was apparently small; see Fig. 2 in \cite{WachSli:2015}. We
would also like to point out that the experiments made in \cite{WesKea:2003} \&\ 
\cite{KekGia:2011} were based on systems with a relatively short memory so that
the resulting models with $L=11$ and $P=3$ were of size $D=364$. In our
benchmark experiment, in turn, the model with $D=2441$\ (\emph{i.e.} of
order of magnitude larger)\ was used.

\item In the \cite{Wach:2016}, where the $l_{1}$ algorithm was tested
against the infinite memory nonlinear system, the comparison, similar to the
added to the corrected version of the manuscript, was presented. The
modeled system was again much simpler than the Wiener-Hammerstein benchmark
examined in our manuscript and the resulting Volterra models were
subsequently significantly smaller in size. Nevertheless, the advantage of
the $l_{1}$ algorithm over the $LS$ ones is also noticeable there; see Fig.
3 in \cite{Wach:2016}. 
\begin{figure}[tbp]
\centering
\ifpicture
	\includegraphics[width=0.5\textwidth]{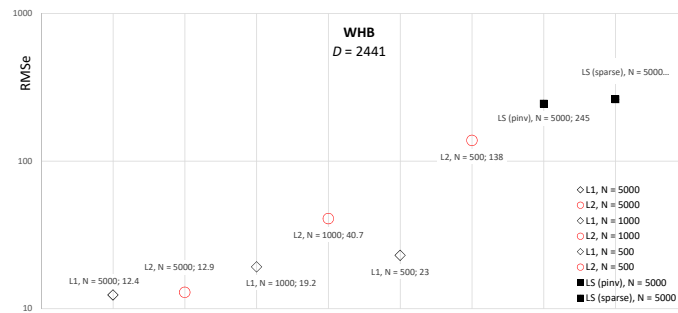} 
\fi
\caption{Performance of the proposed constrained algorithms versus the
unconstrained ones}
\label{Fig_Lq_LS}
\end{figure}
\end{enumerate}

\section{Conclusions}

The problem of effective modeling of nonlinear dynamic systems using short
data records remains both interesting from the formal viewpoint and
important in practice. In the paper we examined the modeling algorithm based
on Volterra series and convex optimization, and concluded that:

\begin{itemize}
\item The error between the resulting empirical models and their best
possible theoretical counterparts is no larger than $\mathcal{O}(\sqrt{\ln D}%
/\sqrt{N})$ for any finite $D>2$. This makes the algorithm robust against
overparametrization and superior to the unconstrained least squares
algorithms, especially for $D\approx \sqrt{N}$; and, \emph{a fortiori}, for $%
D\geq N$.

\item The benchmark experiments confirmed such robustness for $q=1$ in
particular and, to less extent, also for $q>1$. For $q=1$, the algorithm
prefers sparse (\textquotedblleft non-smooth\textquotedblright ) solutions
at the expense of its system nature modeling capability. On the other hand,
the models obtained for $q=1.5,2$, while worse in terms of model error, are
more accurate in reproducing the nature of the system.
\end{itemize}

The most apparent and challenging issue of the proposed approach is the
large size of the Volterra models. This is however the consequence of a poor
prior knowledge about the system structure and the variety of possible
nonlinear structures and characteristics. It should be noted that if more
prior knowledge is available then the more effective and parsimonious
algorithms can be applied, however, such algorithms are specifically
designed for particular system structures (be it Hammerstein, or Wiener, or
Wiener-Hammerstein systems) and fail if the actual system is different than
assumed; see \emph{e.g.} \cite{Will:2012,WachMzyk:2016}.

\appendix

\section*{Proof of Theorem \protect\ref{Th-1}}

The following derivation is based on the proof in \cite{JudNem:2000} and
extends the one in \cite{WachSli:2015} for the case when $q>1$ by employing
the norm inequality in (\ref{norm_ineq}). We use the matrix/vector notation%
\begin{equation*}
\mathbf{A}=E\left\{ \mathbf{s}_{i}\mathbf{s}_{i}^{T}\right\} ,\mathbf{b}%
=2E\left\{ \mathbf{s}_{i}y_{i}\right\} ,\text{ and }c=E\left\{
y_{i}^{2}\right\} ,
\end{equation*}%
where the vectors $\mathbf{s}_{i}$, $i=\tau +1,\ldots ,N,$ are defined as in
(\ref{sh}), so that $Q\left( \mathbf{\theta }\right) $ in (\ref{Q_opt})\ can
be rewritten into a form 
\begin{equation*}
Q\left( \mathbf{\theta }\right) =\mathbf{\theta }^{T}\mathbf{A\theta }-%
\mathbf{\theta }^{T}\mathbf{b}+c=Q_{0}\left( \mathbf{\theta }\right) +c.
\end{equation*}%
where $Q_{0}\left( \mathbf{\theta }\right) =\mathbf{\theta }^{T}\mathbf{%
A\theta }-\mathbf{\theta }^{T}\mathbf{b}$. Analogously, we define 
\begin{equation*}
\mathbf{\hat{A}}=\frac{1}{N-\tau }\sum_{i=\tau +1}^{N}\mathbf{s}_{i}\mathbf{s%
}_{i}^{T}.
\end{equation*}%
Note that $\mathbf{\hat{A}}$, being an empirical counterpart of $\mathbf{A}$%
, is also a symmetric $D\times D$ matrix with the components%
\begin{equation*}
\mathbf{\hat{A}}_{lk}=\frac{1}{N-\tau }\sum_{i=\tau +1}^{N}m_{l}\left( 
\mathbf{u}_{i}\right) m_{k}\left( \mathbf{u}_{i}\right) ,
\end{equation*}%
for $k,l=1,\ldots ,D$. Defining similarly the other two vectors%
\begin{equation*}
\mathbf{\hat{b}}=\frac{2}{N-\tau }\sum_{i=\tau +1}^{N}\mathbf{s}_{i}y_{i}~%
\text{and }\hat{c}=\frac{1}{N-\tau }\sum_{i=\tau +1}^{N}y_{i}^{2}.
\end{equation*}%
we get the following counterpart of (\ref{Q_rand}) 
\begin{equation*}
\hat{Q}\left( \mathbf{\theta }\right) =\mathbf{\theta }^{T}\mathbf{\hat{A}%
\theta }-\mathbf{\theta }^{T}\mathbf{\hat{b}}+\hat{c}=\hat{Q}_{0}\left\{ 
\mathbf{\theta }\right\} +\hat{c},
\end{equation*}%
where 
\begin{equation*}
\hat{Q}_{0}\left( \mathbf{\theta }\right) =\mathbf{\theta }^{T}\mathbf{\hat{A%
}\theta }-\mathbf{\theta }^{T}\mathbf{\hat{b}.}
\end{equation*}

To show the bound for $E\{Q(\mathbf{\hat{\theta}})\}-Q\left( \mathbf{\theta }%
^{\ast }\right) $, we will use the inequality 
\begin{eqnarray*}
Q(\mathbf{\hat{\theta})}-Q\left( \mathbf{\theta }^{\ast }\right) \hspace*{%
-0.05in}\hspace*{-0.05in} &=&\hspace*{-0.05in}\hspace*{-0.05in}Q_{0}(\mathbf{%
\hat{\theta})}-Q_{0}\left( \mathbf{\theta }^{\ast }\right) \\
\hspace*{-0.05in}\hspace*{-0.05in} &=&\hspace*{-0.05in}\hspace*{-0.05in}%
\left[ Q_{0}(\mathbf{\hat{\theta})}-\hat{Q}_{0}\left( \mathbf{\hat{\theta}}%
\right) \right] \hspace*{-0.05in}+\hspace*{-0.05in}\left[ \hat{Q}_{0}\left( 
\mathbf{\hat{\theta}}\right) -Q_{0}\left( \mathbf{\theta }^{\ast }\right) %
\right] \\
\hspace*{-0.05in}\hspace*{-0.05in} &\leq &\hspace*{-0.05in}\hspace*{-0.05in}%
2\sup_{\left\Vert \mathbf{\theta }\right\Vert _{q}\leq D^{1/q-1}}\left\vert 
\hat{Q}_{0}\left( \mathbf{\theta }\right) -Q_{0}\left( \mathbf{\theta }%
\right) \right\vert ,
\end{eqnarray*}%
which allows us to get rid from the analysis the empirical parameter vector $%
\mathbf{\hat{\theta}}$, as it now remains on the left-hand side only. The
term in the right-hand side of the above inequality can further be
decomposed as 
\begin{eqnarray*}
\hat{Q}_{0}\left( \mathbf{\theta }\right) -Q_{0}\left( \mathbf{\theta }%
\right) &=&\mathbf{\theta }^{T}\mathbf{\hat{A}\theta }-\mathbf{\theta }^{T}%
\mathbf{\hat{b}}-\left[ \mathbf{\theta }^{T}\mathbf{A\theta }-\mathbf{\theta 
}^{T}\mathbf{b}\right] \\
&=&\mathbf{\theta }^{T}\left( \mathbf{\hat{A}}-\mathbf{A}\right) \mathbf{%
\theta }-\mathbf{\theta }^{T}\left( \mathbf{\hat{b}}-\mathbf{b}\right) .
\end{eqnarray*}

Taking now into account that the constraint $\left\Vert \mathbf{\theta }%
\right\Vert _{1}\leq 1$ holds by virtue of the assumption in (\ref{argmin})
and because of the norm inequality in (\ref{norm_ineq}), and by applying
both triangle and H\"{o}lder inequalities to $\left\vert \mathbf{\theta }%
^{T}\left( \mathbf{\hat{A}}-\mathbf{A}\right) \mathbf{\theta }\right\vert \ $%
and to $\left\vert \mathbf{\theta }^{T}\left( \mathbf{\hat{b}}-\mathbf{b}%
\right) \right\vert $, we get that%
\begin{equation*}
\left\vert \hat{Q}_{0}\left( \mathbf{\theta }\right) -Q_{0}\left( \mathbf{%
\theta }\right) \right\vert \leq 2\left\Vert \mathbf{\gamma }\right\Vert
_{\infty },
\end{equation*}%
where 
\begin{equation*}
\left\Vert \mathbf{\gamma }\right\Vert _{\infty }=\max_{i}\left\vert \mathbf{%
\gamma }_{i}\right\vert ,\mathbf{\gamma }=[a_{11},\dots ,a_{lk},b_{1},\dots
,b_{D}]^{T},
\end{equation*}%
is an auxiliary vector composed of the unique components $\left\{
a_{lk}\right\} ,$ $1\leq l\leq k\leq D,$ of the matrix $\mathbf{\hat{A}}-%
\mathbf{A}$ and of components $\left\{ b_{l}\right\} $ of the vector $%
\mathbf{\hat{b}}-\mathbf{b}$, respectively. Observing now that\textbf{\ }$%
\mathbf{\gamma }$ can be further rewritten as%
\begin{equation*}
\mathbf{\gamma =}\frac{1}{N-\tau }\sum_{i=\tau +1}^{N}\mathbf{\eta }_{i},
\end{equation*}%
where 
\begin{equation*}
\mathbf{\eta }_{i}\hspace*{-0.02in}=\hspace*{-0.02in}[p_{11}\hspace*{-0.02in}%
\left( \mathbf{u}_{i}\right) \text{,}\dots \text{,}p_{lk}\hspace*{-0.02in}%
\left( \mathbf{u}_{i}\right) \text{,}\dots \text{,}p_{DD}\hspace*{-0.02in}%
\left( \mathbf{u}_{i}\right) \text{,}q_{1}\hspace*{-0.02in}\left( \mathbf{u}%
_{i}\right) \text{,}\dots \text{,}q_{D}\hspace*{-0.02in}\left( \mathbf{u}%
_{i}\right) ]^{T}
\end{equation*}%
for $1\leq l\leq k\leq D$, with%
\begin{equation*}
\begin{array}{l}
p_{lk}\left( \mathbf{u}_{i}\right) =m_{l}\left( \mathbf{u}_{i}\right)
m_{k}\left( \mathbf{u}_{i}\right) -E\left\{ m_{l}\left( \mathbf{u}%
_{0}\right) m_{k}\left( \mathbf{u}_{0}\right) \right\} , \\ 
q_{l}\left( \mathbf{u}_{i}\right) =2m_{l}\left( \mathbf{u}_{i}\right) \left[
m\left( \mathbf{u}_{i}\right) +e_{i}\right] -2E\left\{ m_{l}\left( \mathbf{u}%
_{0}\right) m\left( \mathbf{u}_{i}\right) \right\} ,%
\end{array}%
\end{equation*}%
we get the bound%
\begin{eqnarray*}
E\{Q(\mathbf{\hat{\theta})\}}-Q\left( \mathbf{\theta }^{\mathbf{\ast }%
}\right) &\leq &2E\left\{ \sup_{\left\Vert \mathbf{\theta }\right\Vert
_{1}\leq 1}\left\vert \hat{Q}_{0}\left( \mathbf{\theta }\right) -Q_{0}\left( 
\mathbf{\theta }\right) \right\vert \right\} \\
&\leq &\frac{4}{N-\tau }E\left\Vert \sum_{i=\tau +1}^{N}\mathbf{\eta }%
_{i}\right\Vert _{\infty }.
\end{eqnarray*}

Recall now that the map $m\left( \cdot \right) $ and the dictionary elements
are bounded by $M=\max \left\{ M_{m},M_{d}\right\} $ (\emph{cf.} Assumption 
\textbf{A3}), and hence%
\begin{equation}
\left\vert p_{lk}\right\vert \left. \leq \right. 2M^{2}\text{ and }%
\left\vert q_{l}\right\vert \leq 2M\left\vert e_{i}\right\vert +4M^{2},
\label{M}
\end{equation}%
for all $1\leq l\leq k\leq D$ and $i=\tau +1,\ldots ,N$. Following from this
point the proof as in \cite{WachSli:2015}, we will eventually get the bound%
\begin{equation*}
E\{Q(\mathbf{\hat{\theta}})\}-Q\left( \mathbf{\theta }^{\ast }\right) \leq C%
\frac{\sqrt{N}}{N-\tau }\sqrt{\left( \tau +1\right) \ln D},
\end{equation*}%
with the constant $C=32\sqrt{e}\left( M\sigma +M^{2}\right) $, as in (\ref%
{CV_psi_II}).

\begin{remark}
\label{MR_proof}To take into account the multiplication factor $R$,
introduced in Algorithm \ref{Algorithm_tuning}, and to get the constant $C$
as in (\ref{CV_psi_III}), we only need to replace the bounds in (\ref{M}) by
the following ones%
\begin{equation}
\left\vert p_{lk}\right\vert \left. \leq \right. 2\left( MR\right) ^{2}\text{
and }\left\vert q_{l}\right\vert \leq 2MR\left\vert e_{i}\right\vert
+4(MR)^{2},  \label{MR}
\end{equation}%
for all $1\leq l\leq k\leq D$ and $i=\tau +1,\ldots ,N$.
\end{remark}

\section*{Acknowledgment}

This work was supported in part by the Fund for Scientific Research
(FWO-Vlaanderen), by the Flemish Government (Methusalem), by the Belgian
Government through the Inter university Poles of Attraction (IAP VII)
Program, and by the ERC Advanced Grant SNL-SID, under contract 320378,
and by the Wroc\l aw University of Science and Technology Grants 0401/0217/16 and S50198.

All Authors would also like to thank Prof. Johan Schoukens for his inspiring
suggestions and helpful comments and to Reviewers for their insightful remarks.


\begin{thebibliography}{10}

\bibitem{Ljun:2010}
Ljung, L.: `Perspectives on system identification', \emph{Annual Reviews in
  Control},  2010, \textbf{34}, (1), pp.~1--12

\bibitem{Alpe:1965}
Alper, P.: `A consideration of the discrete {V}olterra series', \emph{Automatic
  Control, IEEE Transactions on},  1965, \textbf{10}, (3), pp.~322--327

\bibitem{Boy:1984}
Boyd, S., Chua, L.O., Desoer, C.A.: `Analytical foundations of {V}olterra
  series', \emph{IMA Journal of Mathematical Control and Information},  1984,
  \textbf{1}, (3), pp.~243--282

\bibitem{San:1992}
Sanderg, I.W.: `Uniform approximation with doubly finite {V}olterra series',
  \emph{Signal Processing, IEEE Transactions on},  1992, \textbf{40}, (6),
  pp.~1438--1442

\bibitem{PeaOgu:2002}
Pearson, R., Ogunnaike, B.A.: `Identification and control using {V}olterra
  models'.
\newblock (Springer,  2002)

\bibitem{KekGia:2011}
Kekatos, V., Giannakis, G.B.: `Sparse {V}olterra and polynomial regression
  models: Recoverability and estimation', \emph{Signal Processing, IEEE
  Transactions on},  2011, \textbf{59}, (12), pp.~5907--5920

\bibitem{Scho:2015}
Schoukens, M., Marconato, A., Pintelon, R., Vandersteen, G., Rolain, Y.:
  `Parametric identification of parallel {W}iener--{H}ammerstein systems',
  \emph{Automatica},  2015, \textbf{51}, pp.~111--122

\bibitem{Mar04}
V.~Z. Marmarelis, \emph{Nonlinear Dynamic Modeling of Physiological Systems},
  ser. IEEE Press Series on Biomedical Engineering.\hskip 1em plus 0.5em minus
  0.4em\relax Piscataway, NJ: Wiley-IEEE Press, 2004.

\bibitem{Doy02}
F.~J. Doyl{e III}, R.~K. Pearson, and B.~A. Ogunnaike, \emph{Identification and
  Control Using Volterra Models}.\hskip 1em plus 0.5em minus 0.4em\relax
  London: Springer-Verlag, 2002.

\bibitem{Chen:2017}
Cheng, C., Peng, Z., Zhang, W., Meng, G.: `Volterra-series-based nonlinear
  system modeling and its engineering applications: A state-of-the-art review',
  \emph{Mechanical Systems and Signal Processing},  2017, \textbf{87},
  pp.~340--364

\bibitem{Wahl:1991}
Wahlberg, B.: `System identification using laguerre models', \emph{Automatic
  Control, IEEE Transactions on},  1991, \textbf{36}, (5), pp.~551--562

\bibitem{Jud95}
Juditsky, A., Hjalmarsson, H., Benveniste, A., Delyon, B., Ljung, L., Sjoberg,
  J., et~al.: `Nonlinear black-box models in system-identification -
  mathematical foundations', \emph{Automatica},  1995, \textbf{31}, (12),
  pp.~1725--1750

\bibitem{Vand:1999}
Vandersteen, G., Schoukens, J.: `Measurement and identification of nonlinear
  systems consisting of linear dynamic blocks and one static nonlinearity',
  \emph{Automatic Control, IEEE Transactions on},  1999, \textbf{44}, (6),
  pp.~1266--1271

\bibitem{WesKea:2003}
Westwick, D.T., Kearney, R.E.: `Identification of Nonlinear Physiological
  Systems'.
\newblock IEEE Press Series on Biomedical Engineering. (Piscataway: Wiley-IEEE
  Press,  2003)

\bibitem{Ogu:2007}
Ogunfunmi, T.: `Adaptive nonlinear system identification: The {V}olterra and
  {W}iener model approaches'.
\newblock (Springer,  2007)

\bibitem{Mar:2004}
Marmarelis, V.Z.: `Nonlinear Dynamic Modeling of Physiological Systems'.
\newblock IEEE Press Series on Biomedical Engineering. (Piscataway, NJ:
  Wiley-IEEE Press,  2004)

\bibitem{Boyd:2004}
Boyd, S., Vandenberghe, L.: `Convex optimization'.
\newblock (Cambridge university press,  2004)

\bibitem{Kaka:2012}
Kakade, S.M., Shalev-Shwartz, S., Tewari, A.: `Regularization techniques for
  learning with matrices', \emph{The Journal of Machine Learning Research},
  2012, \textbf{13}, (1), pp.~1865--1890

\bibitem{JudNem:2000}
Juditsky, A., Nemirovski, A.: `Functional aggregation for nonparametric
  regression', \emph{The Annals of Statistics},  2000, \textbf{28}, (3),
  pp.~681--712

\bibitem{Nem:2000}
Nemirovski, A.
\newblock `Topics in non-parametric statistics'.
\newblock In: Bernard, P., editor. Lecture notes in Mathematics. vol. 1738.
  (New York: Springer,  2000.

\bibitem{WachSli:2015}
Wachel, P., \'{S}liwi\'{n}ski, P.: `Aggregative modelling of nonlinear
  systems', \emph{{IEEE} Signal Processing Letters},  2015, \textbf{22}, (9),
  pp.~1482--1486

\bibitem{Wang:2009}
Wang, Z., Liu, X., Liu, Y., Liang, J., Vinciotti, V.: `An extended {K}alman
  filtering approach to modeling nonlinear dynamic gene regulatory networks via
  short gene expression time series', \emph{IEEE/ACM Transactions on
  Computational Biology and Bioinformatics (TCBB)},  2009, \textbf{6}, (3),
  pp.~410--419

\bibitem{Zeng:2013}
Zeng, N., Wang, Z., Li, Y., Du, M., Cao, J., Liu, X.: `Time series modeling of
  nano-gold immunochromatographic assay via expectation maximization
  algorithm', \emph{IEEE Transactions on Biomedical Engineering},  2013,
  \textbf{60}, (12), pp.~3418--3424

\bibitem{SchSuyLju:2009}
Schoukens, J., Suykens, J.A.K., Ljung, L.
\newblock `{W}iener-{H}ammerstein benchmark'.
\newblock In: 15th IFAC Symposium on System Identification. (Saint-Malo,
  France,  2009.

\bibitem{KibFav:2006}
Kibangou, A.Y., Favier, G.: `{W}iener-{H}ammerstein systems modeling using
  diagonal {V}olterra kernels coefficients', \emph{IEEE Signal Processing
  Letters},  2006, \textbf{13}, (6), pp.~381

\bibitem{Tibs:1996}
Tibshirani, R.: `Regression shrinkage and selection via the {L}asso',
  \emph{Journal of the Royal Statistical Society Series B (Methodological)},
  1996, pp.~ 267--288

\bibitem{Kukr:2009}
Kukreja, S.L.: `Application of a least absolute shrinkage and selection
  operator to aeroelastic flight test data', \emph{International Journal of
  Control},  2009, \textbf{82}, (12), pp.~2284--2292

\bibitem{JameTibs:2013}
James, G., Witten, D., Hastie, T., Tibshirani, R.: `An introduction to
  statistical learning'. vol. 112.
\newblock (Springer,  2013)

\bibitem{Boy85}
Boyd, S., Chua, L.: `Fading memory and the problem of approximating nonlinear
  operators with {V}olterra series', \emph{Circuits and Systems, IEEE
  Transactions on},  1985, \textbf{32}, (11), pp.~1150--1161

\bibitem{Wach:2016}
Wachel, P.: `Convex aggregative modelling of infinite memory nonlinear
  systems', \emph{International Journal of Control},  2016, \textbf{89}, (8),
  pp.~1613--1621

\bibitem{Will:2012}
Wills, A., Ninness, B.: `Generalised {H}ammerstein--{W}iener system estimation
  and a benchmark application', \emph{Control Engineering Practice},  2012,
  \textbf{20}, (11), pp.~1097--1108
  
\bibitem{WachMzyk:2016}
Wachel, P., Mzyk, G.: `Direct identification of the linear block in {W}iener
  system', \emph{International Journal of Adaptive Control and Signal
  Processing},  2016, \textbf{30}, (1), pp.~93--105


\bibitem{Marc:2012}
Marconato, A., Sj{\"o}berg, J., Schoukens, J.: `Initialization of nonlinear
  state-space models applied to the {W}iener--{H}ammerstein benchmark',
  \emph{Control Engineering Practice},  2012, \textbf{20}, (11), pp.~1126--1132

\bibitem{Hebi:2013}
Hebiri, M., Lederer, J.: `How correlations influence {L}asso prediction',
  \emph{Information Theory, IEEE Transactions on},  2013, \textbf{59}, (3),
  pp.~1846--1854

\end{thebibliography}

\end{document}